\newtheorem{rem}{Remark}[section]
\providecommand{\dd}{\mathop{}\!\mathrm{d}}
\newcommand{\reff}[1]{{\rm (\ref{#1})}}
	\newcommand\be {\begin{equation}}
	\newcommand\ee {\end{equation}}
\newcommand{\ve}{\varepsilon}          
\numberwithin{equation}{section}
\title{Intrinsic local Gauss's law preserving PIC method: A self-consistent field-particle update scheme for plasma simulations}
\date{\today}
\begin{document}
	
\author{
Zhonghua Qiao\thanks{Department of Applied Mathematics, the Hong Kong Polytechnic University, Hong Kong, China. (zhonghua.qiao@polyu.edu.hk)}~
\and
Zhenli Xu \thanks{School of Mathematical Sciences, MOE-LSC, and CMA-Shanghai, Shanghai Jiao Tong University, Shanghai 200240, China. (xuzl@sjtu.edu.cn)}~
\and
Qian Yin \thanks{Corresponding author. Department of Applied Mathematics, the Hong Kong Polytechnic University, Hong Kong, China. (qqian.yin@polyu.edu.hk)}~
\and
Shenggao Zhou\thanks{School of Mathematical Sciences, MOE-LSC, and CMA-Shanghai, Shanghai Jiao Tong University, Shanghai 200240, China. (sgzhou@sjtu.edu.cn)}
}

\maketitle

\begin{abstract}	
In order to perform physically faithful particle-in-cell (PIC) simulations, the Gauss's law stands as a critical requirement, since its violation often leads to catastrophic errors in long-term plasma simulations.
This work proposes a novel method that intrinsically enforces the Gauss's law for the Vlasov-Ampère/Vlasov-Poisson system without requiring auxiliary field corrections or specialized current deposition techniques.
The electric field is managed to get updated locally and consistently with the motion of particles via splitting the motion into sub-steps along each dimension of the computational mesh. To further obtain a curl-free electric field, a local update scheme is developed to relax the electric-field free energy subject to the Gauss's law. The proposed method avoids solving the Poisson's or Amp\`ere's equation, resulting in a local algorithm of linear complexity for each time step which can be flexibly combined with various temporal discretization for particle motion in PIC simulations. Theoretical analysis verifies that the proposed method indeed maintains the discrete Gauss's law exactly.
Numerical tests on classical benchmarks, including the Landau damping, two-stream instability and Diocotron instability, demonstrate the key advantages of the proposed method. 
It is expected that the local nature of the proposed method makes it a promising tool in parallel simulations of large-scale plasmas.

\bigskip

\noindent
{\bf Key words and phrases}:  Gauss's law preservation, Particle-in-cell (PIC), Local relaxation, Field-particle update

\noindent
{\bf MSC codes}:  65Z05  65M75 35Q83

\noindent
\end{abstract}

\section{Introduction}
The Vlasov equation is a kinetic equation describing the evolution of charged particles under electromagnetic fields~\cite{Colonna2022plasma}. It differs slightly from the Boltzmann equation~\cite{Hagelaar2005Plasma}, neglecting short-range collisions and emphasizing on interactions governed by self-consistent fields. In the complete electromagnetic formulation, the Vlasov equation is coupled with the full set of Maxwell's equations, where the electric and magnetic field can be solved using charge densities and currents generated by particle motions. Under electrostatic approximations, this framework reduces to two equivalent formulations~\cite{chen2011JCP:VA,Li2023JCP}, the Vlasov-Poisson system, computing electric field via Poisson's equation, and Vlasov-Amp\`ere system, deriving electric field from Amp\`ere's equation under local charge continuity equation, both neglecting magnetic effects.

While analytical solutions exist only for idealized systems, such as the linear waves; therefore, numerical methods become necessary in order to study the phenomenon governed by the Vlasov equation. Overall, these numerical methods can be broadly classified into two categories: Eulerian approaches, which involve direct solutions of the Vlasov equation, and particle-based methods, in which the distribution is approximated through discrete macro-particles.  The Eulerian methods, including the discontinuous Galerkin~\cite{Cheng2014JCP}, finite volume~\cite{Filbet2001SINUM}, semi-Lagrangian~\cite{Liu2023JCPsemiLagrange,Qiu_Shu_2011CiCP}, moment methods~\cite{Taitano2024arxiv} and so on~\cite{Guo2024SISC,Kirchhart2024SISC}, are suitable for high-precision phase-space analysis, whereas the Particle-in-Cell (PIC) method~\cite{Birdsall2018,Kormann2024SISC,Lapenta2015,Muga1970PIC}, belonging to particle-based methods, dominates high-dimensional, large-scale simulations due to its computational scalability.
Some works use phase-space remapping to reduce the numerical noise in PIC simulations to obtain higher accuracy~\cite{Myers2017SISC,Wang2011SISC}.
There are also many attempts to prove the convergence of particle methods for the Vlasov equation coupled with the Poisson's equation~\cite{Cottet1984SINUM,Pinto2016SINUM}. 

Preserving critical mathematical and physical properties in PIC simulations is essential and highly desirable for ensuring the accuracy and reliability of numerical results, especially in long-term simulations. For instance, asymptotic-preserving (AP) schemes~\cite{Degond_JCP2010,Ji2023JMP} that are able to maintain cross-scale consistency have been shown to be indispensable in resolving multiscale coupling challenges when the Debye length approaches the quasi-neutral limit. Also, the total energy, which comprises contributions from electric field energy and kinetic energy, should theoretically remain conservative in continuous systems. But finite-grid instabilities can induce unphysical heating or cooling phenomena. This has motivated extensive development of energy-conserving schemes~\cite{chen2015CPC,Chacon2016JCP,Chen2020JCP,Lapenta2017JCP,Ricketson2023CPC,Ricketson2024explicit}. Additionally, momentum-conserving methods~\cite{Cohen1982JCP,Mason198JCP} are desirable as well. Among the various conservation requirements, mass conservation and the Gauss's law form the cornerstones of robust PIC simulations, and have been critically addressed in the aforementioned studies. Significantly, as shall be shown in Section 2.3, the design of local mass conservation inherently helps the preservation of the Gauss's law. As highlighted by Anderson \emph{et al.}~\cite{Anderson2020JCP}, failure to enforce the Gauss's law can lead to catastrophic consequences in PIC simulations.

This study proposes an intrinsic local Gauss's law-preserving PIC framework that can be strategically combined with additional conservation-enforcing methods to achieve comprehensive conservation properties for the Vlasov-Amp\`ere systems.
Recent progress on the enforcement of the Gauss's law, which is systematically reviewed in Section 2.3, fall into two distinct categories: field correction methods~\cite{Boris1970,Langdon1992CPC,Marder1987JCP,Munz2000JCP} that post-process the electric field to eliminate inconsistencies in the Gauss's law, and charge-conserving methods~\cite{Morse1971PF,Sokolov2013CPC,Villasenor1992CPC} that enforce discrete continuity equations through carefully designed current densities. This work proposes a novel Gauss's law-preserving framework that fundamentally differs from these conventional approaches.
The proposed algorithm achieves this through a unified field-particle advancement scheme that simultaneously updates electric fields and particle positions while strictly maintaining the discrete Gauss's law
up to machine precision.
More specifically,
the electric field is locally updated to account for electric fluxes induced by the particle motion along the trajectory of motion, and then these updates further diffuse to the whole area through a local curl-free relaxation approach aiming to minimize the electric field free energy.
The whole process eliminates explicit dependence on the Poisson or Amp\`ere solvers through its self-consistent treatment of field propagation and particle dynamics. Such type of local algorithms
is firstly proposed for  Monte Carlo simulations~\cite{M:JCP:2002,MR:PRL:2002,RM:JCP:2004MC}, as well as the molecular dynamics simulations~\cite{FH:PRE:2014,FXH:JCP:14,PD:JPCM:2004,RM:PRL:2004} for electrostatic systems. Later, it was further generalized to solve continuum electrostatic models, such as the Poisson--Boltzmann equation~\cite{BSD:PRE:2009,Qiao2024SISC,ZWL:PRE:2011} and Poisson--Nernst--Planck equations~\cite{Qiao2023SIAP,Qiao2022NumANP}.
The distinctive features on the preservation of the Gauss's law and inherent locality makes the proposed method particularly advantageous for plasma simulations where simultaneous conservation of multiple properties (mass, momentum, energy) is crucial.

The rest of this paper is organized as follows. Section 2 provides an overview of Vlasov type of models and PIC methods, and reviews related works on the treatment of electric fields that preserve the Gauss's law. In Section 3, we elaborate on the proposed local Gauss's law preserving PIC methods. Some numerical experiments are conducted in Section 4. Finally, conclusions are summarized in Section 5.

\section{Model and numerical methods}
\subsection{Vlasov-Amp\`ere equations}
The Vlasov equation, often used to describe inertial confinement fusion, astrophysics and so on, is given by
	$$
		\frac{\partial f_s(\boldsymbol{x}, \boldsymbol{v}, t)}{\partial t} +\boldsymbol{v} \cdot \nabla_x f_s(\boldsymbol{x}, \boldsymbol{v}, t)+\bm{F}(\bm{x},t) \cdot \nabla_v f_s(\boldsymbol{x}, \boldsymbol{v}, t)=0,
	$$
	where $f_s(\bm{x}, \bm{v}, t)$ represents the distribution of particles of species $s$ at a specific position $\bm{x}$, velocity $\bm{v}$ and time $t$. Here $\bm{F}(\bm{x},t)$ is the external force. Generally, the force field includes electrostatic and Lorentz force, i.e. $\bm{F}(\bm{x},t)=q_s(\bm{E}+\bm{v}\wedge \bm{B})/m_s $, where $q_s$ is charge carried by particles, $m_s$ is mass, and the electric field $\bm{E}$ and magnetic field $\bm{B}$ are described by the Maxwell's equation
    \begin{align}
    		&\frac{\partial \bm{E}}{\partial t}-c^2 \nabla \times \bm{B}=-\frac{\bm{J}}{\epsilon_0}, \label{Ampere}\\
    		&\frac{\partial \bm{B}}{\partial t}+\nabla \times \bm{E}=0, \label{Faladay}\\
    		&\nabla \cdot \bm{E}=\frac{\rho}{\epsilon_0}, \label{Gauss}\\
      & \nabla \cdot \bm{B}=0. \label{Thomas}
    	\end{align}
     Here $c$ is the speed of light, $\epsilon_0$ is the vacuum permittivity, $\rho(\bm{x},t)$ the charge density  and $\bm{J}(\bm{x},t)$ the current density are expressed by $$\rho(\boldsymbol{x}, t)=\sum_s q_s \int_{\Omega_v} f_s \dd \boldsymbol{v},~~~\boldsymbol{J}(\boldsymbol{x}, t)=\sum_s q_s \int_{\Omega_v} f_s \boldsymbol{v} \dd \boldsymbol{v},$$
     respectively.

    In this work, we consider the electrostatic limit.
    The above Vlasov-Maxwell model can be approximated by the Vlasov-Poisson (VP) model, in which the force becomes $\bm{F}(t,\bm{x})=q_s \bm{E}/m_s$ with $\bm{E}=-\nabla \phi$. The electric potential $\phi$ is governed by the Poisson's equation
    $$-\nabla \cdot \nabla \phi(\boldsymbol{x}, t)={\rho}/{\epsilon_0}.$$

    By integrating the Vlasov equation with respect to the velocity field and summing over $s$, one gets the charge continuity equation
	\begin{equation} \label{charge}
	\frac{\partial \rho(\boldsymbol{x}, t)}{\partial t}+\nabla \cdot \boldsymbol{J}(\boldsymbol{x}, t)=0.
	\end{equation}
	By the Gauss's law~\reff{Gauss}, one gets
	$$
    \nabla \cdot \left[ \epsilon_0	\frac{\partial \bm{E}}{\partial t}+\boldsymbol{J}\right]=0.
	$$
    Therefore, the electric field $\bm{E}$ can be described by the Amp\`ere's equation~\reff{Ampere}
    \begin{equation}\label{Ampere2}
        \epsilon_0	\frac{\partial \bm{E}}{\partial t}+\boldsymbol{J}=\bm{Q},
    \end{equation}
where $\bm{Q}$ is an extra freedom satisfying a divergence free condition $\nabla \cdot \bm{Q}=0$. Coupling \reff{Ampere2} and the curl-free condition $\nabla \times \bm{E}=0$ with the Vlasov equation gives the Vlasov-Amp\`ere (VA) model.

	In the continuum, the VP and VA formulations can be shown to be equivalent~\cite{chen2011JCP:VA,Li2023JCP}. Above derivation also reveals that the charge continuity equation~\reff{charge} and Gauss's law~\reff{Gauss} play critical roles in conversion between the VP and VA formulations.

We perform non-dimensionalization, as detailed in~\cite{Li2023JCP}, and still take
the original variables without causing any ambiguity. For the sake of descriptive convenience, we consider the Vlasov equation with only one specie, i.e. electron:
 \begin{equation}
		\frac{\partial f(\boldsymbol{x}, \boldsymbol{v}, t)}{\partial t} +\boldsymbol{v} \cdot \nabla_x f(\boldsymbol{x}, \boldsymbol{v}, t)-\bm{E}(\bm{x},t) \cdot \nabla_v f(\boldsymbol{x}, \boldsymbol{v}, t)=0. \label{Vlasov}
\end{equation}
The non-dimensionalized Poisson's equation and Amp\`ere's equation are given by
$$
-\nabla \cdot \lambda^2 \nabla \phi(\boldsymbol{x}, t)=\rho={n_0-n},~~ \bm{E}=-\nabla \phi,
$$
and
$$
\lambda^2	\frac{\partial \bm{E}}{\partial t}+\boldsymbol{J}=\bm{Q}, ~~\nabla\cdot \bm{Q}=0,~~\nabla\times \bm{E}=0,
$$
respectively,
where $\lambda$ is ratio of the Debye length to the characteristic length of the system under consideration, $n_0$ is the density of background ion, and
$$n(\boldsymbol{x}, t)=- \int_{\Omega_v} f_s \dd \boldsymbol{v},~\hbox{and}~~\boldsymbol{J}(\boldsymbol{x}, t)=- \int_{\Omega_v} f_s \boldsymbol{v} \dd \boldsymbol{v}.$$

\subsection{Particle-in-cell method}
The PIC method is a widely used numerical technique for simulating the Vlasov equations for plasma dynamics. It combines the motion of charged particles with the evolution of electromagnetic fields, making it a powerful tool for tackling the high-dimension challenge in the simulation of a variety of plasma phenomena. In the PIC method, the distribution function $f$ is approximated by
\begin{equation}\label{fapp}
f(\boldsymbol{x}, \boldsymbol{v}, t)=\sum_{p=1}^{N_p} w_p S\left(\boldsymbol{x}-\boldsymbol{x}_p\right) \delta\left(\boldsymbol{v}-\boldsymbol{v}_p\right),
\end{equation}
where $N_p$ is the number of computational particles, $\delta\left(\boldsymbol{v}-\boldsymbol{v}_p\right)$ is the Dirac delta function, and $w_p$ is the weight of the $p$-th particle defined by the ratio of numbers of real physical particles to computational particles and is often treated as a constant in collisionless PIC simulations. In addition, $S\left(\boldsymbol{x}-\boldsymbol{x}_p\right)$ is the shape function that has a particular compact support, is symmetric, and integrates to 1, i.e. $\int_{\Omega_x}S(\bm{x})d\bm{x} = 1$. It can be chosen as the B-spline, Gaussian functions or others. The most commonly used shape function in 1D is the ``tent" function, defined as $S(x)=\max\{0,1-|x|/h\}/h$, where $h$ is the grid spacing. The shape function in higher dimensions is obtained by taking the product of 1D ones, e.g., $S(x,y,z)=S_x(x)S_y(y)S_z(z)$.
Substituting the particle approximation~\reff{fapp} into the Vlasov equation~\reff{Vlasov}, one can obtain the evolution equation for the particles from the first-order momentum of the Vlasov equation~\cite{Lapenta2015}:
\begin{align}
		&\frac{\dd \bm{x}_p}{\dd t} =\bm{v}_p , \label{newton1}\\
		& \frac{\dd \bm{v}_p}{\dd t} = - \bm{E}(\bm{x}_p), \label{newton2}
\end{align}
for $p=1,\cdots,N_p$.
The mobile particles follow the above Newton’s law, which imparts a “granular”
nature. Meanwhile, the potential and electric field are represented by values on a regular
mesh of grid points, necessitating the projection of discrete values defined on the grid points
onto the distribution of particles, and vice versa.
Densities defined on the mesh are calculated by assigning the particle charge to nearby mesh
points:
\begin{equation}\label{density}
n_h(t)=\sum_{p=1}^{N_p}\omega_pS\left(\bm{x}_h-\bm{x}_p (t)\right),
\end{equation}
where the subscript $h$ indicates that the quantity is defined on the mesh.
Similarly, for currents,
\begin{equation}\label{current}
\bm{J}_h(t)=-\sum_{p=1}^{N_p}\omega_p S\left(\bm{x}_h-\bm{x}_p (t)\right)\bm{v}_p(t).
\end{equation}
With densities $n_h(t)$ and currents $\bm{J}_h(t)$, field $\bm{E}_h$ defined on mesh can be computed using the Poisson's equation or Amp\`ere's equation. Then,
potentials and fields at particle positions are further obtained by interpolating the mesh-defined values by
\[
\boldsymbol{E}(\bm{x}_p)=|\Delta V|\sum_{h}  \boldsymbol{E}_h(t) S \left(\boldsymbol{x}_h-\boldsymbol{x}_p(t)\right),
\]
where $|\Delta V|$ is the volume of a single cell.

In PIC simulations of plasma, it is necessary to obtain
accurate spatial distribution of potential
and electric fields, ensuring proper treatment of particle dynamics. The motion of particles
can in turn cause global changes of the fields in the system. Therefore, even minute
particle movement requires recalculating the Poisson’s equation globally at each time step,
which accounts for the vast majority of the total computational effort. The update method proposed in this work ensures rapid and accurate acquisition of the desired electric fields by concurrently updating the particle positions and the electric field, while rigorously adhering to the Gauss's law.

\subsection{Gauss's law and charge conservation}~\label{s:Gauss}
The Gauss’s law~\reff{Gauss}, which states that the divergence of the electric field is proportional to the charge density, is a fundamental principle of electromagnetism. In PIC simulations, accurate enforcement of the Gauss’s law ensures that the electric field is correctly coupled with the charge distribution, being essential for modeling the interactions between particles and fields.
In this section, we emphasize the importance of the Gauss's law in PIC simulations and review various methods from the literature devoted to satisfy the Gauss's law, highlighting the advantage of our algorithm in naturally fulfilling the law.

One class of methods is to correct the electric field to enforce the Gauss's law by adding a correction term.
The well-known Boris correction~\cite{Boris1970,Birdsall2018} introduces an extra term $\delta \phi$
so that the corrected electric field  $\bm{E}_{\text{correct}}=\bm{E}-\nabla \delta \phi$  satisfies  the Gauss's law $\nabla \cdot \varepsilon
\bm{E}_{\text{correct}}=\rho$. Hence, solving the Poisson's equation,
$$\nabla \cdot \ve \nabla \delta \phi =\nabla \cdot \ve \bm{E}-\rho,$$
is needed in the Boris correction method.
Later, Marder~\cite{Marder1987JCP} and Langdon~\cite{Langdon1992CPC} proposed an alternative type of method through replacing the Poisson solver by a local update of electric fields
$$
\bm{E}_{\text{correct}}^{n+1}=\bm{E}^{n+1}+\Delta t \nabla[d (\nabla \cdot \ve \bm{E}-\rho)],
$$
where $d$ is the diffusion parameter satisfying a stability condition~\cite{Langdon1992CPC,Mardahl1997CPC}. Later, Munz \emph{et al.} generalized this idea into a framework of the Lagrange multiplier method~\cite{Munz2000JCP}. Recently, Chen and T\'{o}th~\cite{Chen2019JCP} proposed a series of particle position correction methods to realize the Gauss's law, taking into account the trade-off between computational cost and the tolerance error of the Gauss's law. Different from the previous, such methods correct the displacement of particles to ensure that the energy conservation property is not violated, especially when combined with energy conserving semi-implicit methods~\cite{Lapenta2017JCP}.

The global charge conservation law, which describes that the total charge in the system remains constant, is crucially important as well. The particle method used in this work naturally preserves this global law. In literature, density or current is designed to
fulfill the discrete version of the local charge conservation equation~\reff{charge} and therefore satisfy the Gauss's law. The rationale behind this is detailed as follows.
Consider certain temporal discretization
 $$
	\epsilon_0 \frac{\bm{E}^{n+1}_h-\bm{E}^{n}_h}{\Delta t}+\bm{J}^*_h=\bm{Q}^*_h,
 $$
 where the superscript $*$ indicates an either implicit or explicit discretization time step. Taking spatial divergence on the both sides yields
 $$\epsilon_0  \nabla_h \cdot \frac{\bm{E}^{n+1}_h-\bm{E}^{n}_h}{\Delta t}+\nabla_h \cdot \bm{J}^*_h=0,
	$$
where $\nabla_h$ is the discrete analog of the operator $\nabla$ and shall be detailed in Section~\ref{ss:dis}.
By the discrete charge conserving law
 \begin{equation}\label{discharge}
     \frac{\rho^{n+1}_{h}-\rho^{n}_{h}}{\Delta t} +\nabla_h \cdot \bm{J}^*_h=0,
 \end{equation}
one can obtain
	$$
	\epsilon_0  \nabla_h \cdot \frac{\bm{E}^{n+1}_h-\bm{E}^{n}_h}{\Delta t}+\frac{\rho^{n+1}_h-\rho^{n}_h}{\Delta t}=0.
	$$
Therefore, one can readily derive that provided by the Gauss's law at time $n\Delta t$ is valid, the discrete charge conserving law and the Gauss's law at time $(n+1)\Delta t$ are equivalent.
	Substituting the expressions~\reff{density} and \reff{current} with the shape function into~\reff{discharge}, one gets
	$$
	\frac{S(\bm{x}_{h}-\bm{x}_p^{n+1})-S(\bm{x}_{h}-\bm{x}_p^{n})}{\Delta t} +\nabla_h \cdot [\bm{v}^*_p S(\bm{x}_{h}-\bm{x}_p^{*})]=0,
	$$
    which is the basis for most ``current weighting" methods
~\cite{Morse1971PF,Villasenor1992CPC,Sokolov2013CPC}.
In PIC simulations, the charge conserving law may get violated when a particle crosses cell boundaries. The particle motion can be split into sub-steps to restrict the particle moving within one cell, so that the exact charge conservation can be achieved~\cite{chen2011JCP:VA,Chacon2016JCP}.

\section{Local Gauss's law preserving PIC method}
\subsection{Discretization}\label{ss:dis}
The classical temporal discretization for particle motion  equations~\reff{newton1}--~\reff{newton2} in the PIC is the so-called ``leapfrog'' scheme, where positions are defined at integer values of the time-step, i.e., $\bm{x}_p^{m} = \bm{x}_p(m\Delta t)$, while velocities are defined at
half-integer values: $\bm{v}_p^{m+1/2} = \bm{v}_p\left(\left(m + 1/2\right)\Delta t\right)$.
They are alternately updated in the following manner:
\begin{align}
	&\bm{x}^{m+1}=\bm{x}^m+{\Delta t} \bm{v}^{m+\frac{1}{2}}, \label{DisNewton1}\\
	&\bm{v}^{m+\frac{3}{2}}=\bm{v}^{m+\frac{1}{2}}-{\Delta t } \bm{E}^{m+1}(\bm{x}_p),\label{DisNewton2}
\end{align}
where $\bm{E}^{m+1}(\bm{x}_p)=\bm{E}(\bm{x}_p^{m+1})$ is the electric field interpolated using grid data. The initial velocity at a half time step is obtained using an explicit scheme
\[
\bm{v}^{\frac{1}{2}}=\bm{v}^{0}-\frac{1}{2}{\Delta t } \bm{E}^{0}(\bm{x}_p).
\]
Such staggered updating of positions and velocities over time steps gives its name leapfrog. Meanwhile, this explicit scheme has several advantages, such as second-order accuracy, computational simplicity, and preservation of phase space volume, making it suitable for long-term simulations of Hamiltonion systems.

For spatial discretization, we adopt the standard finite-difference Yee mesh~\cite{Yee1966,Yee1988}, in which scalar quantities, such as potential and charge density, are discretized on grid points of cells, while the electric fields are discretized at the center of cell edges.  Consider a square computational domain $\Omega_x=[0,L]\times[0,L]$ with periodic boundary conditions. The domain is covered with a uniform mesh with grid spacing $h=L/(N-1)$ in each dimension. Specifically, let $n_{i,j}~(i,j=0,\cdots,N-1)$ be the numerical approximation of density $n$ on the grid point $(ih, jh)$, and $E_{i+1/2,j}$ and $E_{i,j+1/2}$ are discrete analogues of the projected values of electric fields at midpoints of the edges $\left(\left(i + 1/2\right)h, jh\right)$
 and $\left(ih,\left(j + 1/2\right)h\right)$,  respectively.
With the Yee mesh, the Gauss's law is approximated by
\begin{equation}\label{DiscreteGauss}
\frac{E_{i+\frac{1}{2},j}-E_{i-\frac{1}{2},j}}{ h} +\frac{E_{i,j+\frac{1}{2}}-E_{i,j-\frac{1}{2}}}{h} =\frac{n_0-n_{i,j}}{\lambda^2}.
\end{equation}
It has been shown that such spatial discretization has second order accuracy for the electric field in both $L^{\infty}$ and $L^2$ norms~\cite{Li2024convergence}.


\subsection{Particle motion with electric-field update}

As demonstrated in Section~\ref{s:Gauss}, it is crucial to satisfy the Gauss's law for long-term simulations. The change in the electric field depends on the evolution of charge density, which in turn determines particle motion. We here propose a novel algorithm that updates the electric field simultaneously with particle motion, ensuring that Gauss's law is satisfied within machine precision. For simplicity, the algorithm is presented in 2D and can be extended in a straightforward manner to 3D.  Let particle position $\bm{x}_p=({x}_{p,x},{x}_{p,y})$ and velocity $\bm{v}_p=({v}_{p,x},{v}_{p,y})$.
When the particle positions get updated according to the ``leapfrog'' scheme~\reff{DisNewton1},
we update the locations of particles parallel to the links of the mesh in the following two steps
\begin{align}
&x^{m+1}_{p,x}=x_{p,x}^m+{\Delta t} v_{p,x}^{m+\frac{1}{2}}, \label{position1}\\
&x^{m+1}_{p,y}=x_{p,y}^m+{\Delta t} v_{p,y}^{m+\frac{1}{2}}.\label{position2}
\end{align}
As shown in Figure~\ref{f:move1} (a), the particle motion from $\bm{x}_p^{m}=({x}_{p,x}^{m},{x}_{p,y}^{m})$ to $\bm{x}_p^{m+1}=({x}_{p,x}^{m+1},{x}_{p,y}^{m+1})$ within one timestep is split into two orthogonal moves: first $\bm{x}_p^{m}$ to $({x}_{p,x}^{m+1},{x}_{p,y}^{m})$, and then to $\bm{x}_p^{m+1}$.

\begin{figure}[htbp]
			\centering
            \includegraphics[scale=0.56]{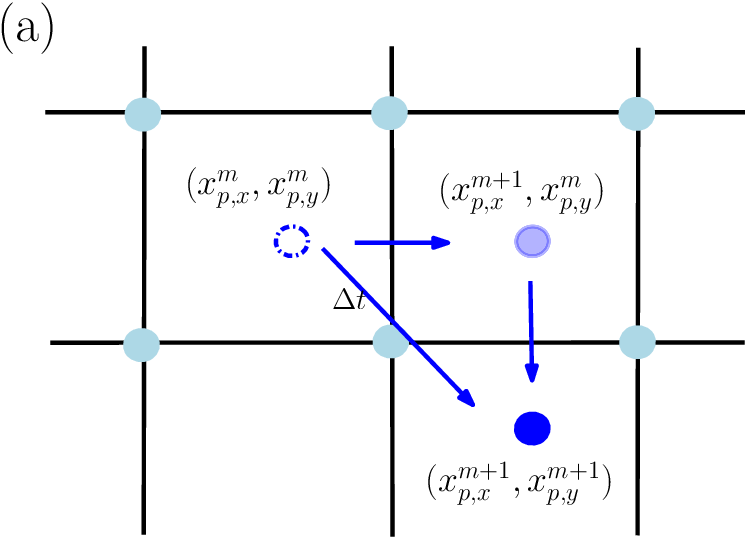}
            \vspace{2em}

			\includegraphics[scale=0.45]{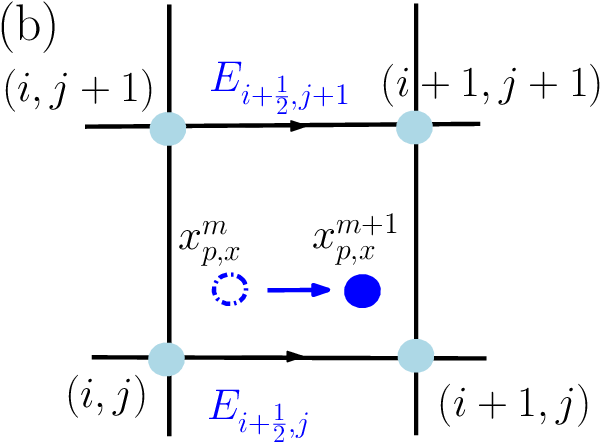}
            \hspace{1em}
            \includegraphics[scale=0.45]{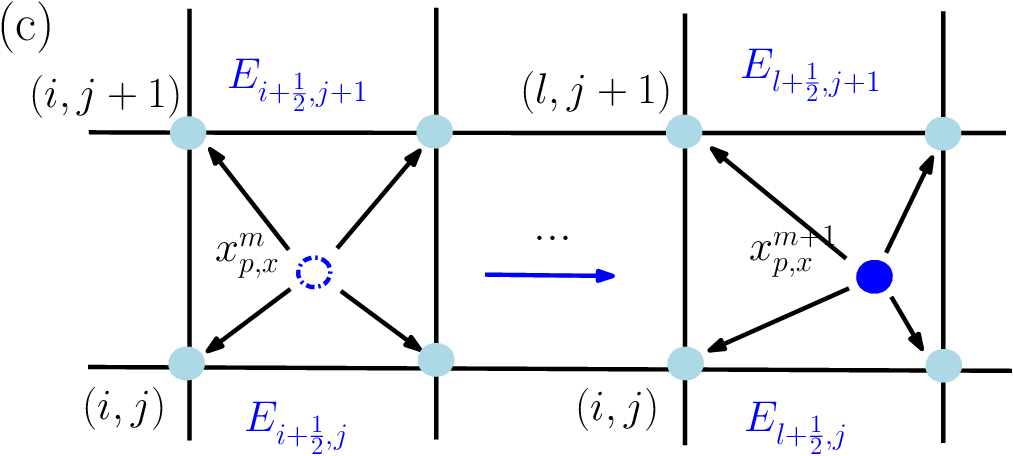}
            \vspace{2em}
			\caption{Electric-field update with particle motion. (a) The particle motion is split into two orthogonal moves. (b) Update $\bm{E}$ based on  Gauss's law within one cell. (c) Update $\bm{E}$ based on  Gauss's law when crossing the cell boundary.}
			\label{f:move1}
\end{figure}
After updating the position of each particle along each direction, the electric field along the path of motion in that direction is also immediately updated to satisfy the Gauss's law.
For simplicity, we consider the support of the shape function $S$ as $[0,h]^2$. Consider one particle that moves from point $\bm{x}_p^{m}=({x}_{p,x}^{m},{x}_{p,y}^{m})$ to point $({x}_{p,x}^{m+1},{x}_{p,y}^{m})$ in $x$ direction as in~\reff{position1}. There exists a pair of index $(i,j)$, such that $ih\leq {x}_{p,x}^{m}<(i+1)h$ and $jh\leq {x}_{p,y}^{m}<(j+1)h$. Let us first assume that the movement distance is small enough so that the updated position is still within the same cell (cf. Figure~\ref{f:move1} (b)), meaning that $ih\leq {x}_{p,x}^{m+1}<(i+1)h$. Define the distances
$$d_{x,i}^m=| ih-{x}_{p,x}^{m}| ~\text{ and }~ d_{y,j}^m=|jh-{x}_{p,y}^{m}|.$$
Such movement causes the corresponding electric field $E_{i+1/2,j}$ and $E_{i,j+1/2}$ changed by
  $$
  E^*_{i+\frac{1}{2},j} \leftarrow E_{i+\frac{1}{2},j}- \frac{1}{\lambda^2}h w_p \left[ S_x(d_{x, i}^{m+1})-S_x(d_{x, i}^{m}) \right]S_y(d_{y, j}^{m}),
		$$
$$
		E^*_{i+\frac{1}{2},j+1} \leftarrow E_{i+\frac{1}{2},j+1}- \frac{1}{\lambda^2} h w_p \left[ S_x(d_{x, i}^{m+1})-S_x(d_{x, i}^{m}) \right]S_y\left(d_{y, j+1}^{m}\right).$$

If the movement distance is larger and the particle crosses the cell boundary, there exists an integer $l$ such that $lh\leq{x}_{p,x}^{m+1}<(l+1)h$; cf.~Figure~\ref{f:move1} (c). Here we assume $l>i$. The other case $l<i$ can be treated similarly.
All relevant electric fields along the path, i.e. $E_{i+1/2,j},\cdots$ $E_{l+1/2,j}$ and $E_{i+1/2,j+1},\cdots$ $E_{l+1/2,j+1}$ are updated accordingly based on the Gauss's law~\reff{DiscreteGauss}:
	\begin{equation}\label{m:v1}
		E^*_{i+\frac{1}{2},j} \leftarrow E_{i+\frac{1}{2},j}- \frac{1}{\lambda^2} h w_p \left[ -S_x(d_{x, i}^{m}) \right]S_y(d_{y, j}^{m}),
		\end{equation}
 \begin{equation}\label{m:vy1}
		E^*_{i+\frac{1}{2},j+1} \leftarrow E_{i+\frac{1}{2},j+1}- \frac{1}{\lambda^2} h w_p \left[ -S_x(d_{x, i}^{m}) \right]S_y\left( d_{y, j+1}^{m}\right),\end{equation}
  \begin{equation}\label{m:v2}
		E^*_{r+\frac{1}{2},j} \leftarrow E_{r+\frac{1}{2},j}- \frac{1}{\lambda^2}h w_p \left[ -S_x(d_{x, i}^{m}) +S_x\left(d_{x, i+1}^{m+1}\right)-S_x\left(d_{x, i+1}^{m}\right)\right]S_y(d_{y, j}^{m}),
		\end{equation}
  \begin{equation}\label{m:vy2}
		E^*_{r+\frac{1}{2},j+1} \leftarrow E_{r+\frac{1}{2},j+1}- \frac{1}{\lambda^2}h w_p \left[ -S_x(d_{x, i}^{m}) +S_x\left(d_{x, i+1}^{m+1}\right)-S_x\left(d_{x, i+1}^{m}\right) \right]S_y\left(d_{y, j+1}^{m}\right), \end{equation}
    for $r=i+1,\cdots,l-1,$ and
  \begin{equation}\label{m:v3}
		E^*_{l+\frac{1}{2},j} \leftarrow E_{i+\frac{1}{2},j}+ \frac{1}{\lambda^2}h w_p \left[ S_x\left(d_{x, l+1}^{m+1}\right)\right]S_y(d_{y, j}^{m}),
  \end{equation}
 \begin{equation}\label{m:vy3}
		E^*_{l+\frac{1}{2},j+1} \leftarrow E_{l+\frac{1}{2},j+1}+ \frac{1}{\lambda^2}h w_p \left[ S_x\left(d_{x, l+1}^{m+1}\right)\right]S_y\left(d_{y, j+1}^{m}\right) .\end{equation}
  The motions in other directions~\reff{position2}, moving from point $(x_{p,x}^{m+1},x_{p,y}^{m})$ to $\bm{x}_p^{m+1}=(x_{p,x}^{m+1},x_{p,y}^{m+1})$, are treated similarly. By doing so, we maintain the Gauss's law strictly by updating the electric field simultaneously with the particle motion, instead of solving the Poisson's or Amp\`ere's equation.


\subsection{Electric field correction}
   After above updates, there is one more condition that needs to be satisfied: the curl-free condition of the electric field. This condition can be achieved through a local curl-free relaxation algorithm that is of linear complexity~\cite{MR:PRL:2002,Qiao2023SIAP}. The starting point of the local relaxation algorithm is to minimize a convex electrostatic free energy $$\mathcal{F}_{\rm pot}(\bm{E})=\lambda^2 \int_\Omega {\bm{E}^2}/{2} \dd\bm{x},$$ subject to the constraint of the Gauss's law~\reff{Gauss}. It can be shown by the Lagrange multiplier method that the unique minimizer of the constrained optimization problem is the desired solution satisfying the curl-free condition ~\cite{Qiao2023SIAP}.
   After the updates in the previous section, we obtain Gauss's law satisfying fields $\bm{E}^*_h$, which shall be labeled as $\bm{E}^{(0)}_h$. In this section, we shall develop a Gauss's law preserving iterative method to minimize the free energy so that it approaches curl free on the Gauss's law preserving manifold.

   \begin{figure}
       \centering
		\includegraphics[height=4.4cm]{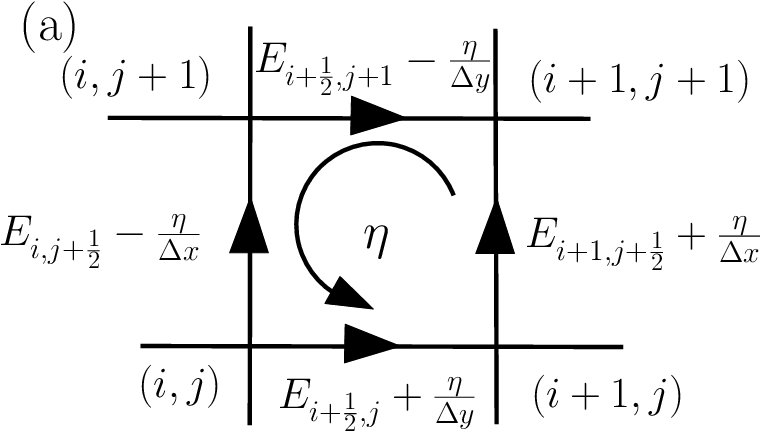}
		\hspace{-3em}
		\includegraphics[height=4.3cm]{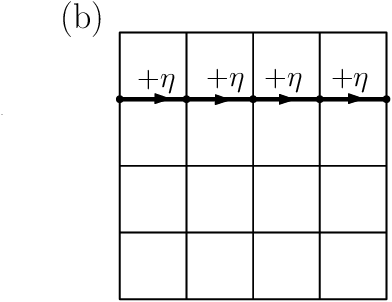}
        \vspace{2em}
		\caption{The correction of the electric field. (a): Local curl-free relaxation. (b): Line shift.}
  \label{f:maggs}
	\end{figure}

  The first step of correction involves looping over all the cells in the computational mesh. Let us take one cell as an example, as shown in Figure~\ref{f:maggs} (a). We increase the electric field components
$E_{i+1/2,j}$ and $E_{i+1,j+1/2}$ by a small increment $\eta$, while decrease $E_{i+1/2,j+1}$ and $E_{i,j+1/2}$ by $\eta$:
    \begin{equation}\label{update1}
        \begin{aligned}
    	&E_{i+\frac{1}{2},j}^{(q+1)}\leftarrow E_{i+\frac{1}{2},j}^{(q)}+\eta,~~~~~~E_{i+1,j+\frac{1}{2}}^{(q+1)}\leftarrow E_{i+1,j+\frac{1}{2}}^{(q)}+\eta, \\
    	&E_{i+\frac{1}{2},j+1}^{(q+1)}\leftarrow E_{i+\frac{1}{2},j+1}^{(q)}-\eta,~~~E_{i,j+\frac{1}{2}}^{(q+1)}\leftarrow E_{i,j+\frac{1}{2}}^{(q)}-\eta,
    \end{aligned}
    \end{equation}
    where the supscript $q$ denotes the iteration step in this correction phase.
    It is straightforward to check that the total fluxes entering and leaving each one of the four involved vertices remain unchanged, maintaining the Gauss's law on these four vertices.
    The associated change in the free energy due to the update of electric fields reads
		\[
			\delta \mathcal{F}_{\rm pot}^{h}[\eta] = 2\lambda^2{\eta ^2} + \lambda^2{\eta}\left[ \left({{E}^{(q)}_{i+\frac{1}{2},j}}-{{E}^{(q)}_{i+\frac{1}{2},j+1}}\right)+ \left({{E}^{(q)}_{i+1,j+\frac{1}{2}}}-{{E}^{(q)}_{i,j+\frac{1}{2}}}\right) \right],	
		\]
    where $\mathcal{F}_{\rm pot}^{h}$ is the central-differencing approximation of $\mathcal{F}_{\rm pot}$ on the Yee mesh. It is easy to find that the associated free-energy change is minimized with
		\begin{equation}\label{eta1}
			\eta= - \frac{{{E}^{(q)}_{i+\frac{1}{2},j}}-{{E}^{(q)}_{i+\frac{1}{2},j+1}}+ {{E}^{(q)}_{i+1,j+\frac{1}{2}}}-{{E}^{(q)}_{i,j+\frac{1}{2}}}}{2}.
		\end{equation}

   The second step of correction involves line shift for the electric field to further minimize the free energy~\cite{Li2024convergence,jiang2018improved}. We follow the same idea that the flux entering and leaving each vertex should be kept the same. As depicted in Figure~\ref{f:maggs} (b), we increase electric fields on the $b$-th ($b=0,\cdots,N-1$) row by the same value $\eta$:
   \begin{equation}\label{update2}
       E_{i+\frac{1}{2},b}^{(q+1)}\leftarrow E_{i+\frac{1}{2},b}^{(q)}+\eta,~~  i=0,\cdots,N-1.
   \end{equation}
   The associated free-energy change reads
   \[
		\delta \mathcal{F}_{\rm pot}^{h}[\eta] =\lambda^2{\eta ^2} \frac{N}{2} + \lambda^2{\eta} \sum_{i=1}^{N}E_{i+\frac{1}{2},b}^{(q)},
		\]
   which has a unique minimizer
		\begin{equation}\label{eta2}
		    \eta= - \frac{\sum_{i=0}^{N-1}E^{(q)}_{i+\frac{1}{2},b}}{N}.
		\end{equation}
  Similarly, we update the electric fields on the $a$-th ($a=0,\cdots,N-1$) column by
  \begin{equation}\label{update3}
       E_{a,j+\frac{1}{2}}^{(q+1)}\leftarrow E_{a,j+\frac{1}{2}}^{(q)}+\eta,~~ j=0,\cdots,N-1.
 \end{equation}
Then the minimizer of the associated free-energy change is given by
\begin{equation}\label{eta3}
		    \eta= - \frac{\sum_{j=0}^{N-1}E^{(q)}_{a,j+\frac{1}{2}}}{N}.
\end{equation}

 It is stressed that the two local correction steps, \reff{update1},  \reff{update2}, and~\reff{update3}, are able to maintain the Gauss's law perfectly. Furthermore, the associated free-energy change is quadratic with an explicit expression for the minimizer. Therefore, the computational cost for each iteration step is linear.

Combining above steps described in sections 3.1, 3.2 and 3.3, we summarize the whole
local Gauss's law preserving PIC (GP-PIC) method in Algorithm \ref{alg:GP-PIC}.

\begin{algorithm}[H] 
\caption{Local Gauss's Law Preserving PIC (GP-PIC) Method}   
\label{alg:GP-PIC}  
\begin{algorithmic} 
\STATE \textbf{Initial:} $\bm{x}_p^m$, $\bm{v}_p^{m+1/2}$, $\bm{E}^m_h$, and tolerance $\tau$
\STATE \textbf{Update:} $\bm{x}_p^{m+1}$, $\bm{v}_p^{m+3/2}$, $\bm{E}^{m+1}_h$

\STATE \underline{\textbf{Step 1.}} Compute $\bm{x}_p^{m+1}$ and get $\bm{E}_h^*$
\FOR{$p = 1$ {to} $N_p$}
    \STATE Update the $x$-component position with~\reff{position1} and corresponding electric fields;
    \STATE Update the $y$-component position with~\reff{position2} and corresponding electric fields;
\ENDFOR

\STATE \underline{\textbf{Step 2.}} Set $q=0$ and $\bm{E}^{(0)}_h:=\bm{E}^{*}_h$. Correct $\bm{E}^{(0)}_h$ to get $\bm{E}^{m+1}_h$

\WHILE{the energy change $|\Delta \mathcal{F}_{\text{pot}}| \geq \tau$}
    \FOR{$i,j = 0$ \textbf{to} $N-1$}
    \STATE Update $\bm{E}^{(q)}_h$ by \reff{update1}: $\bm{E}^{(q_1)}_h \leftarrow \bm{E}^{(q)}_h$;
     \ENDFOR

\FOR{$b = 0$ {to} $N-1$}
    \STATE Update $\bm{E}^{(q_1)}_h$ by \reff{update2}: $\bm{E}^{(q_2)}_h \leftarrow \bm{E}^{(q_1)}_h$;
\ENDFOR
\FOR{$a = 0$ {to} $N-1$}
    \STATE Update $\bm{E}^{(q_2)}_h$ by \reff{update3}: $\bm{E}^{(q+1)}_h \leftarrow \bm{E}^{(q_2)}_h$;
\ENDFOR
\ENDWHILE

\STATE Set $\bm{E}_h^{m+1}=\bm{E}_h^{(q+1)}$

\STATE \underline{\textbf{Step 3.}}
Compute $\bm{v}_p^{m+1}$
\FOR{$p = 1$ {to} $N_p$}
    \STATE$$
v_{p,x}^{m+\frac{3}{2}}=v_{p,x}^{m+\frac{1}{2}}-{\Delta t  }  E_{p,x}^{m+1},~~~~~~v_{p,y}^{m+\frac{3}{2}}=v_{p,y}^{m+\frac{1}{2}}-{\Delta t } E_{p,y}^{m+1}.
$$
\ENDFOR

\end{algorithmic}
\end{algorithm}

\begin{theorem}
The proposed GP-PIC algorithm respects the discrete Gauss's law in the sense of
\[
\frac{E_{i+1/2,j}^{m+1}-E_{i-1/2,j}^{m+1}}{h}+\frac{E_{i,j+1/2}^{m+1}-E_{i,j-1/2}^{m+1}}{h}=\frac{n_0-n_{i,j}^{m+1}}{\lambda^2},
\]
provided that the Gauss's law at the previous time step holds:
\[
\frac{E_{i+1/2,j}^{m}-E_{i-1/2,j}^{m}}{h}+\frac{E_{i,j+1/2}^{m}-E_{i,j-1/2}^{m}}{h}=\frac{n_0-n_{i,j}^{m}}{\lambda^2}.
\]
\end{theorem}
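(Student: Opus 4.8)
The plan is to decompose the one-step map $(\bm{x}_p^m,\bm{E}_h^m)\mapsto(\bm{x}_p^{m+1},\bm{E}_h^{m+1})$ along the three stages of Algorithm~\ref{alg:GP-PIC} and to show that each stage keeps the field on the discrete Gauss's-law manifold. Stage~3 updates only velocities, hence changes neither $\bm{E}_h^{m+1}$ nor the density $n_{i,j}^{m+1}$ (which depends only on $\bm{x}_p^{m+1}$), so it can be ignored. Stage~2 freezes the density and touches only the field, so the whole theorem reduces to two claims: (i) the field $\bm{E}_h^\ast$ produced by Stage~1 already satisfies the discrete Gauss's law \reff{DiscreteGauss} with the \emph{new} density $n_{i,j}^{m+1}$; and (ii) each elementary correction \reff{update1}, \reff{update2}, \reff{update3} of Stage~2 preserves \reff{DiscreteGauss}. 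Chaining (i), (ii) and the hypothesis at level $m$ then gives the result.

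Claim (ii) I would dispatch by a direct flux count at each vertex. The plaquette move \reff{update1} alters exactly two of the four edges incident to each of its four corner vertices, one by $+\eta$ and one by $-\eta$, so the discrete divergence $\bigl(E_{i+1/2,j}-E_{i-1/2,j}+E_{i,j+1/2}-E_{i,j-1/2}\bigr)/h$ is unchanged there and is untouched elsewhere; the row shift \reff{update2} raises the two horizontal edges at any vertex of row $b$ by the same $\eta$, so their contribution to the divergence cancels, and off-row vertices are untouched; the column shift \reff{update3} is the same with rows and columns interchanged. Since the right-hand side of \reff{DiscreteGauss} does not change in Stage~2, \reff{DiscreteGauss} is preserved verbatim.

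For claim (i) I would use superposition: $n_{i,j}$ is linear in the shape functions and every field increment in \reff{m:v1}--\reff{m:vy3} and in the within-cell formulas is linear in $w_p$, so it suffices to treat a single particle performing one orthogonal sub-move, say the $x$-move \reff{position1} (the $y$-move is symmetric, and the two sub-moves compose). During an $x$-move the ordinate is frozen, so only $S_x$ changes; only the two grid rows $j,j+1$ carrying the particle see a density change, and only the horizontal edges $E_{r+1/2,j},E_{r+1/2,j+1}$ from the source column $i$ to the target column $l$ are touched. The remaining work is a vertex-by-vertex check along this strip that the change in the discrete divergence at $(r,j)$ equals $-(n_{r,j}^{m+1}-n_{r,j}^m)/\lambda^2$. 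In the within-cell case ($l=i$) the two ``downstream'' vertices $(i+1,j),(i+1,j+1)$ reduce to the partition-of-unity identity $S_x(d_{x,i})+S_x(d_{x,i+1})=1/h$ of the tent shape function on $[ih,(i+1)h]$ (plus vanishing of $S_x$ outside its support), and the two ``upstream'' vertices $(i,j),(i,j+1)$ are immediate. In the cell-crossing case the intermediate increments \reff{m:v2}--\reff{m:vy2} all collapse to the same value after applying partition of unity, so the divergence is unchanged at every interior vertex $i+1<r<l$ — consistent with such vertices lying outside the particle's support at both levels — while the endpoint identities at $(i,j),(l,j),(l+1,j)$ and their $j+1$ counterparts follow from partition of unity at levels $m$ and $m+1$ together with a short telescoping, with \reff{m:v1}--\reff{m:vy1} and \reff{m:v3}--\reff{m:vy3} supplying exactly the needed boundary terms.

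The only genuinely fiddly part — and where I expect to spend the effort — is that cell-crossing sub-case of (i): one must pin down exactly which edges receive which increment, which vertices change their incident flux, and how the ``departing'' weight $S_x(d_{x,i}^m)$ and the ``arriving'' weights $S_x(d_{x,l}^{m+1}),S_x(d_{x,l+1}^{m+1})$ enter, and then confirm that the telescoped edge increments reproduce the density difference at each vertex. Everything else is a line or two. I would close by noting that the argument uses only the finite support and partition-of-unity properties of $S$, so it carries over to wider B-spline shape functions once \reff{m:v1}--\reff{m:vy3} are replaced by their analogues on the correspondingly wider stencils.
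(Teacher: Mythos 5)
Your proposal is correct and follows essentially the same route as the paper's own proof: a stage-by-stage decomposition, a vertex flux count showing that the plaquette and line-shift corrections \reff{update1}--\reff{update3} leave the discrete divergence invariant (which the paper asserts and delegates to a reference), and a vertex-by-vertex verification along the swept strip for Step~1 hinging on the partition-of-unity identity $S_x(d_{x,i}^{m})+S_x(d_{x,i+1}^{m})=S_x(d_{x,l}^{m+1})+S_x(d_{x,l+1}^{m+1})$, exactly as in the paper. The only additions on your side are the explicit linearity/superposition reduction to a single particle and single sub-move, which the paper uses implicitly; the ``fiddly'' cell-crossing bookkeeping you flag is precisely the computation the paper carries out at the nodes $(i,j)$, $(i+1,j)$, $(r,j)$, $(l,j)$, $(l+1,j)$.
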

\begin{proof}
 Let us check the Gauss's law at each vertex after each update. For Step 1 described in Section 3.2, we only check the case where the particle moves across the cell boundary. When the particle motion is within one cell, the proof is simpler and is omitted for brevity. With the motion as depicted in Figure~\ref{f:move1} (c) and the update~\reff{m:v1}-\reff{m:vy3} in electric fields, the discrete Gauss's law at the node $(i,j)$ becomes
    $$
    \begin{aligned}
        &\lambda^2\frac{E^*_{i+1/2,j}-E^m_{i-1/2,j}}{h} +\lambda^2\frac{E^m_{i,j+1/2}-E^m_{i,j-1/2}}{h}\\
        &= \lambda^2\frac{E^m_{i+1/2,j}-E^m_{i-1/2,j}}{h} +\lambda^2\frac{E^m_{i,j+1/2}-E^m_{i,j-1/2}}{h}-\omega_p [-S_x(d_{x,i}^m)]S_y(d_{y,j}^{m}) \\
        &= n_0-n^{m}_{i,j}-\omega_p [-S_x(d_{x,i}^m)]S_y(d_{y,j}^{m}) \\
        &= n_0-\left[n^{m}_{i,j}+ \Delta n_{i,j}\right],
    \end{aligned}
    $$
    where $\Delta n_{i,j}$ is the change of density at the point $(i,j)$ when a particle  moves from $\bm{x}_p^{m}=({x}_{p,x}^{m},{x}_{p,y}^{m})$ to $({x}_{p,x}^{m+1},{x}_{p,y}^{m})$. If $l=i+1$, the discrete Gauss's law at the node $(i+1,j)$ becomes
    $$
    \begin{aligned}
        & \lambda^2\frac{E^*_{i+3/2,j}-E^*_{i+1/2,j}}{h} +\lambda^2\frac{E^m_{i+1,j+1/2}-E^m_{i+1,j-1/2}}{h}\\
        &= n_0-n^{m}_{i+1,j} +w_p \left[  S_x\left(d_{x, i+2}^{m+1}\right)-S_x\left(d_{x, i}^{m}\right)\right]S_y(d_{y, j}^{m})\\
        &= n_0-n^{m}_{i+1,j}-w_p \left[  S_x\left(d_{x, i+1}^{m+1}\right)-S_x\left(d_{x, i+1}^{m}\right)\right]S_y(d_{y, j}^{m}) \\
        &= n_0-\left[n^{m}_{i+1,j}+ \Delta n_{i+1,j}\right],
    \end{aligned}
    $$
    where the identity $S_x(d_{x, i}^{m})+S_x(d_{x, i+1}^{m}) = S_x(d_{x, i+1}^{m+1})+S_x(d_{x, i+2}^{m+1})$ is used by the unity-integral property of the shape function.
    If $l>i+1$, then $S_x(d_{x, i+1}^{m+1})=0$. The discrete Gauss's law at the node $(i+1,j)$ becomes
    $$
    \begin{aligned}
        & \lambda^2\frac{E^*_{i+3/2,j}-E^*_{i+1/2,j}}{h} +\lambda^2\frac{E^m_{i+1,j+1/2}-E^m_{i+1,j-1/2}}{h}\\
        &= n_0-n^{m}_{i+1,j} -w_p \left[  S_x\left(d_{x, i+1}^{m+1}\right)-S_x\left(d_{x, i+1}^{m}\right)\right]S_y(d_{y, j}^{m})\\
        &= n_0-n^{m}_{i+1,j}-w_p \left[0-S_x\left(d_{x, i+1}^{m}\right)\right]S_y(d_{y, j}^{m})\\
        &= n_0-\left[n^{m}_{i+1,j}+ \Delta n_{i+1,j}\right].
    \end{aligned}
    $$
    The discrete Gauss's laws at the nodes $(r,j)$, $r=i+2,\cdots,l-1$ remain its form:
    $$
    \begin{aligned}
        & \lambda^2\frac{E^*_{r+1/2,j}-E^*_{r-1/2,j}}{h} +\lambda^2\frac{E^m_{r,j+1/2}-E^m_{r,j-1/2}}{h} \\
        &=\lambda^2\frac{E^m_{r+1/2,j}-E^m_{r-1/2,j}}{h} +\lambda^2\frac{E^m_{r,j+1/2}-E^m_{r,j-1/2}}{h}\\
        &=n_0-n_{r,j}^m.
    \end{aligned}
    $$
    The Gauss's law at $(l,j)$ becomes
    $$
    \begin{aligned}
        & \lambda^2\frac{E^*_{l+1/2,j}-E^*_{l-1/2,j}}{h} +\lambda^2\frac{E^m_{l,j+1/2}-E^m_{l,j-1/2}}{h}\\
        &= n_0-n^{m}_{l,j}+w_p \left[  S_x\left(d_{x, l+1}^{m+1}\right)-S_x\left(d_{x, i}^{m}\right)-S_x\left(d_{x, i+1}^{m}\right)\right]S_y(d_{y, j}^{m})\\
        &= n_0-n^{m}_{l,j}- w_p S_x\left(d_{x, l}^{m+1} \right)S_y(d_{y, j}^{m}) \\
        &=n_0-\left[n^{m}_{l,j}+ \Delta n_{l,j}\right],
    \end{aligned}
    $$
    where $S_x(d_{x, i}^{m})+S_x(d_{x, i+1}^{m})=S_x(d_{x, l}^{m+1})+S_x(d_{x, l+1}^{m+1})$ is used. The Gauss's law at the point $(l+1,j)$ becomes
    $$
         \begin{aligned}
        & \lambda^2\frac{E^m_{l+3/2,j}-E^*_{l+1/2,j}}{h} +\lambda^2\frac{E^m_{l+1,j+1/2}-E^m_{l+1,j-1/2}}{h}\\
        &= n_0-n^{m}_{l+1,j}-\omega_p S_x(d_{x,l+1}^{m+1}) S_y(d_{y,j}^{m}) \\
        &= n_0- \left[ n^{m}_{l+1,j} + \Delta n_{l+1,j} \right].
    \end{aligned}
     $$
     Thus, the electric field $\bm{E}^{*}$ satisfies Gauss's law in Step 1.     For the relaxation stage from $\bm{E}^{*}$ to $\bm{E}^{m+1}$ in Step 2, the Gauss's law preserving property is proved in~\cite{Li2024convergence}.
     In summary, the Gauss's law at each grid point is perfectly maintained in the whole GP-PIC method.
\end{proof}
\begin{rem}
We remark that the relationship between the electric field and particle motion positions is not constrained by the leapfrog scheme, and therefore, the GP-PIC method constitutes a general framework applicable to other temporal discretization schemes.
Also, the GP-PIC method can be coupled with other conservative schemes to achieve corresponding conservation properties. For example, energy conservation can be enforced by velocity correction~\cite{Liang2024JCP:energy}, while asymptotic preserving strategies~\cite{Li2023JCP} can be adopted to deal with quasi-neural limit. Moreover, by dimension splitting, the GP-PIC method can be directly extended to three dimensions.
\end{rem}

\section{Numerical Results}\label{s:num}
 To demonstrate the effectiveness of the proposed GP-PIC method,
 we conduct a series of 2D2V numerical experiments, including scenarios such as Landau damping, two-stream instability, and Diocotron instability. In this section, simulations are all performed on a computational domain
 $\Omega_{\bm{x}}=[0,L]\times[0,L]$ with periodic boundary conditions.  To ensure the electro-neutrality condition required by the periodic boundary conditions, immobile ions are uniformly distributed in the background, i.e., only electrons are moving within the system.

 We compare the GP-PIC results with those obtained by solving the Amp\`{e}re equation and Poisson's equation for the update of the electric field.
 The ``leapfrog" scheme is employed for computing particle motions among all methods.
 The method, labeled as ``VA-PIC'' in the following numerical tests, computes the electric field by Amp\`ere's equation
\begin{equation}\label{ampere}
    \bm{E}^{m+1}_h=\bm{E}^{m}_h+\Delta t \left( -\bm{J}_h^{m+\frac{1}{2}}+\bm{Q} \right)/\lambda^2 ,
\end{equation}
where we choose
 $\bm{Q}=0$, $\bm{J}_h^{m+\frac{1}{2}}=\sum_{p=1}^{N_p} w_p S(\bm{x}_h-\bm{x}_p^{m+1/2}) \bm{v}_p^{m+1/2}$ with $\bm{x}_p^{m+1/2}=(\bm{x}_p^{m+1}+\bm{x}_p^{m})/2$.
Then, the electric field is further relaxed by using Step 2 in the algorithm to achieve the curl-free condition.
 Notice that the Gauss's law is not exactly satisfied due to the interpolation used in $\bm{J}_h^{m+\frac{1}{2}}$ when solving the Amp\`ere's equation~\reff{ampere}.
 However, this method, which computes the electric field through Amp\`ere's equation, maintains the Gauss's law in Maxwell-Amp\`ere Nernst-Planck equations~\cite{Qiao2023SIAP,Qiao2022NumANP}, as it inherently enforces the local charge conservation through direct resolution of the Nernst-Planck equation.

 Alternatively, we can obtain the electric field being curl-free directly by solving the Poisson's equation for potential
\begin{equation}\label{pssn}
- \lambda^2 \text{ grad}_h \cdot \text{div}_h \phi^{m+1}=n_0-n^{m+1},
\end{equation}
where $\text{ grad}_h$ and $\text{div}_h$ are obtained by using central differencing on the Yee mesh. Then, the electric field is calculated by $\bm{E}^{m+1}_h=-\text{ grad}_h \phi^{m+1}$, which clearly satisfies both the Gauss's law and the curl-free condition with the accuracy dependent on the solver of the linear system~\reff{pssn}.
The GP-PIC method demonstrates inherent consistency with solutions from the Poisson's equation due to its enforcement of Gauss's law perfectly, despite the the accuracy of being curl-free being controlled by tolerance $\tau$ in the GP-PIC algorithm.

\subsection{Landau damping}

\begin{figure}[t!]
		\centering
	\includegraphics[scale=0.5]{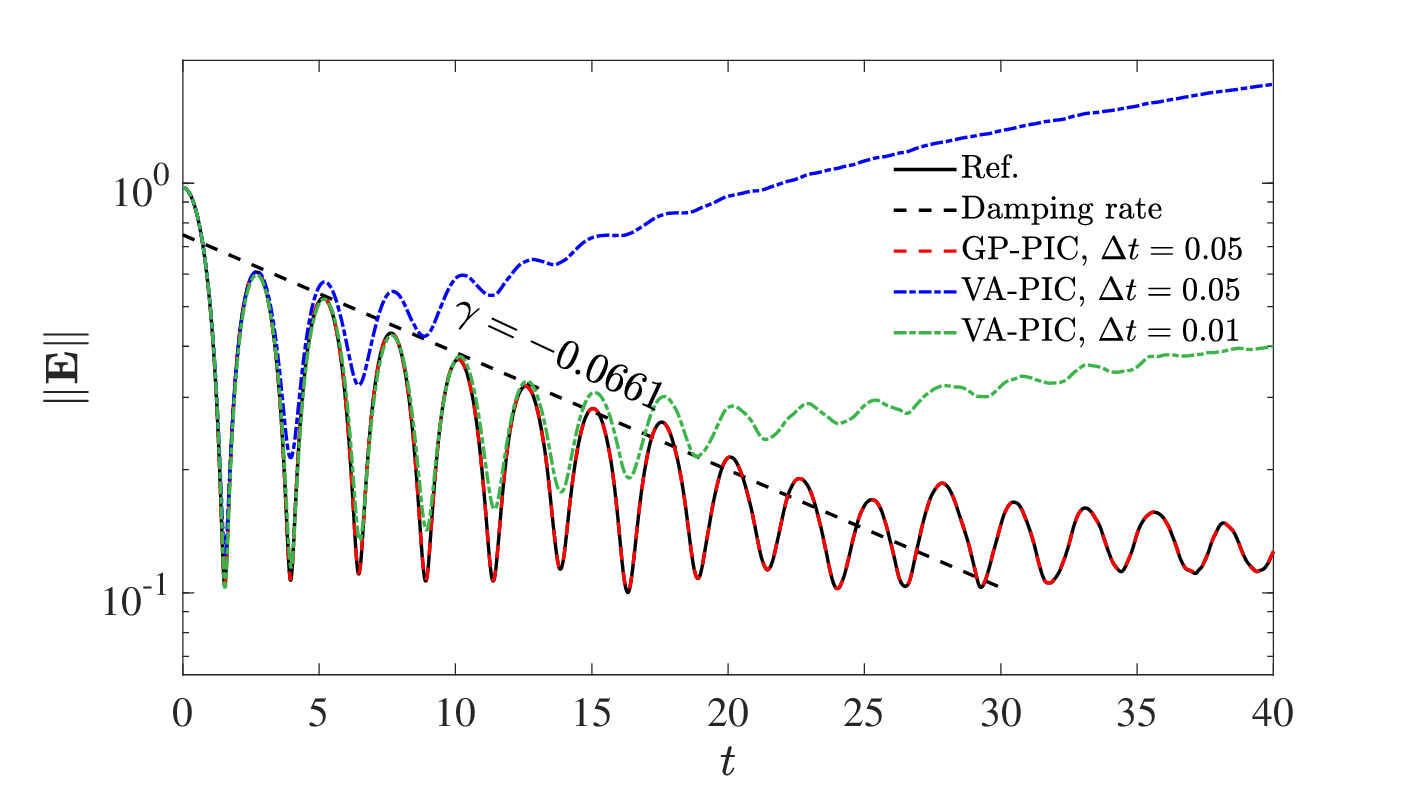}	\caption{The square root of electric energy in the simulation of Landau damping with different numerical methods. The reference solution, labeled as ``Ref.'' is obtained by solving the Poisson's equation for the electric field with $\Delta t=0.01$.}
  \label{f:landau1}
	\end{figure}

Landau damping is a widely tested PIC simulation~\cite{Filbet2001JCP,Besse2003JCP,Liu_Xu_2017,Medaglia2023JCP,Nicolas2009JCP}. We  initialize the position and velocity of particles by sampling the distribution
\[
	f(\bm{x},\bm{v},t=0)=\frac{1}{2\pi L^2} \left(1+\alpha \cos(\bm{k} \cdot\bm{x})\right) e^{-\frac{\bm{v}^2}{2}},
\]
 where $\alpha=0.05$, $\bm{k}=(0.4,0)^T$ and $L=2\pi/0.4$. We choose the mesh size $N_x=N_y=32$ and the number of particle $N_p=6.4\times 10^5$. The solution obtained by solving the Poisson's equation for the electric field with a time step $\Delta t=0.01$ serves as the reference solution. It is compared with the results obtained from our local method and solving the Amp\`ere's equation with a timestep $\Delta t=0.05$. In Figure~\ref{f:landau1}, we present the evolution of the square root of the electric energy given by
 \[
 \|\bm{E}\|= \left[ \sum_{i,j}^{N^2} h^2
 \left(|{E}_{i+\frac{1}{2},j}|^2+  |E_{i,j+\frac{1}{2}}|^2\right) \right] ^{\frac{1}{2}},
 \]
using the three different methods.
 The square of electric energy shown in Figure~\ref{f:landau1} demonstrates that the results from our new methods agree with the reference solution very well and show a damping rate close to $\gamma=-0.0661$ that is predicted by theory~\cite{Liu_Xu_2017}. In contrast, the result computed by the Amp\`ere's equation, which may not satisfy the Gauss's law, exhibits relatively large errors, especially in late stage of long-term simulations. Its performance can be significantly improved with smaller time stepsizes, but remaining worse than the Gauss's law satisfying schemes for long time. These results demonstrate the importance of maintaining the Gauss's law during the simulations.

\begin{figure}[t!]
		\centering
  \includegraphics[scale=0.48]{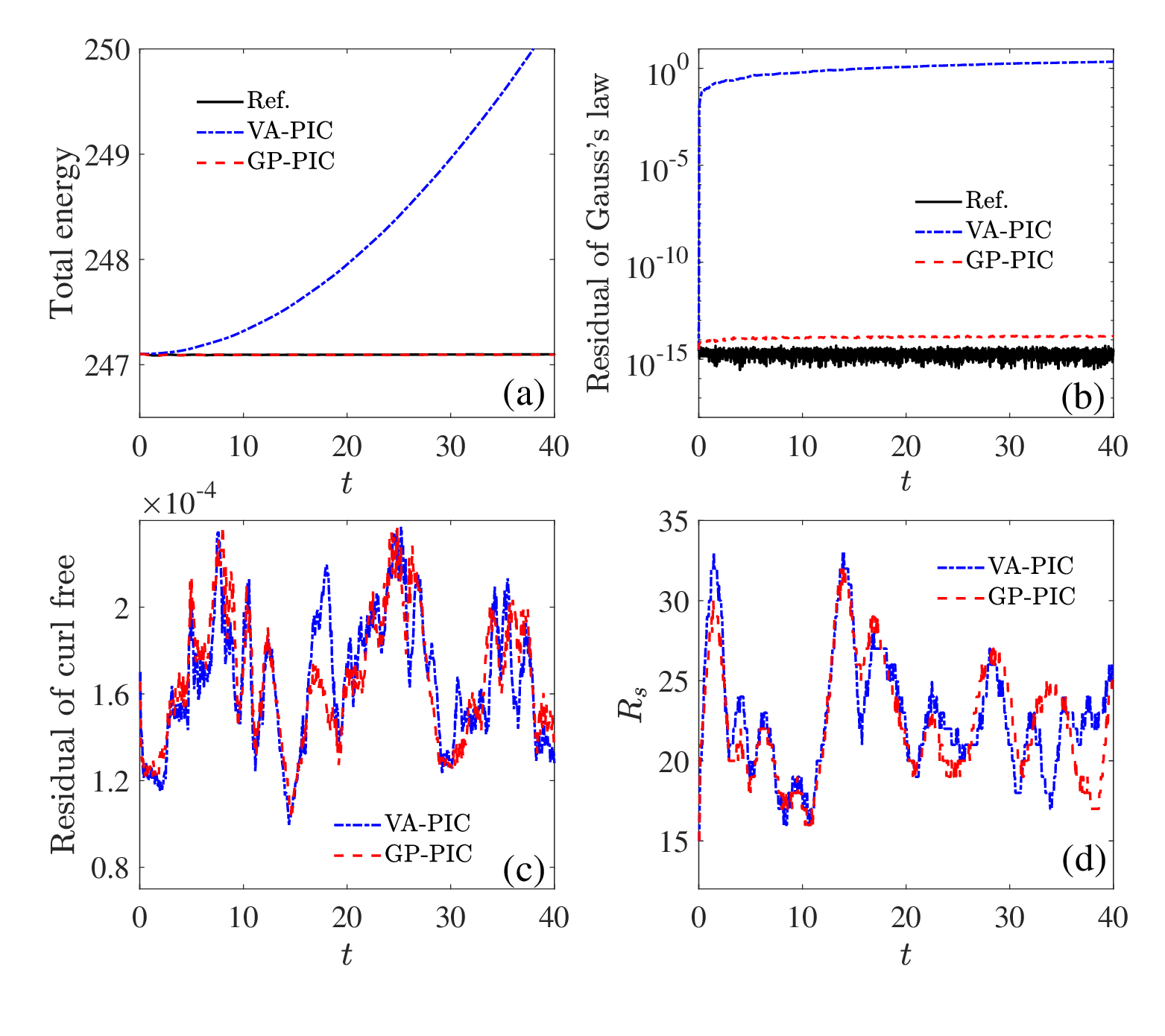}
  \caption{Simulation results for the case of the Landau damping with $\Delta t=0.05$ and $\tau=10^{-7}$: (a) the total energy; (b) the
residual error of the Gauss’s law; (c) the residual error of the curl-free condition; (d) The number
of iteration steps, $R_s$, used in relaxation stage. }
  \label{f:landau2}
	\end{figure}

 Figure~\ref{f:landau2} presents the performance of the methods in preserving the total energy,
 the Gauss's law, the curl-free condition in terms of residual errors and the number of iteration steps used in the correction stage.
 The total energy is discretized by
 \[
 F=\frac{\lambda^2}{2} \|\bm{E}\|^2 +\sum_{p=1}^{N_p} \omega_p \frac{\bm{v}_p^2}{2}.
 \]
 Although we have not theoretically establish the energy conservation for the proposed local method, Figure~\ref{f:landau2} (a) demonstrates that the total energy is preserved as well as the results obtained by the Poisson's equation. In contrast, the total energy obtained by the Amp\`ere's equation keeps growing in the simulations. In terms of energy conservation, the energy-conserving correction strategies~\cite{Liang2024JCP:energy} can be combined to improve the performance of the numerical methods.
 From Figure~\ref{f:landau2} (b), one can observe that the residual errors of the Gauss's law for the electric field obtained from the proposed local method and Poisson's equation are both close to the machine precision, due to round-off error. However, the residual error of the Gauss's law computed by the Amp\`ere's equation increase significantly in simulations.
 As the electric field calculated using Poisson's equation with Yee mesh strictly satisfies the curl-free condition, we have omitted the Poisson's results in Figure~\ref{f:landau2} (c). 
 The residual error of curl free calculated by the proposed local method and the Amp\`ere's equation increases in initial a few steps and is well controlled in the entire simulations via the stopping criterion $\tau=10^{-7}$ we set in both algorithms. By setting the same stopping criterion, the iteration steps in the correction stage are on the same level for GP-PIC method and VA-PIC method as displayed in Figure~\ref{f:landau2} (d).


\subsection{Two-stream instability}

\begin{figure}[t!]
		\centering
	\includegraphics[scale=0.5]{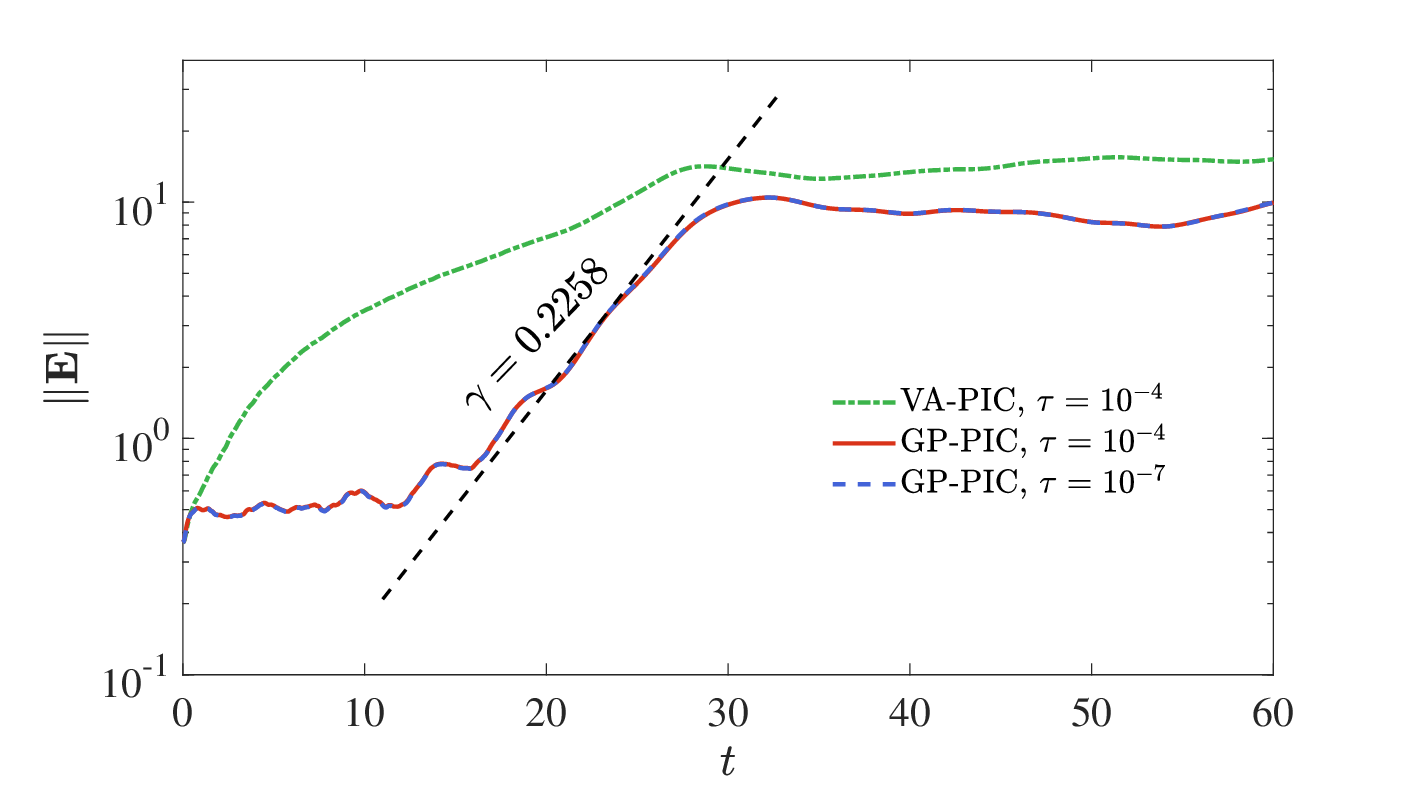}
		\caption{The square root of electric energy in the simulation of two-stream instability with $\Delta t=0.05$. The black dashed line represents a theoretical growth rate $\gamma=0.2258$. Two different stopping criteria $\tau=10^{-4}$ and $\tau=10^{-7}$ are considered in the curl-free relaxation stage. }
  \label{f:twostream1}
	\end{figure}

Two-stream instability is a typical phenomenon that has been well investigated by many studies~\cite{Besse2003JCP,Nicolas2009JCP,Medaglia2023JCP,Ricketson2024explicit}. In the two-stream instability test,  two beams of particles, each with an average velocity in opposite directions, move through space. Such a plasma system is inherently unstable and these two beams eventually merge to form a vortex structure. During the simulation, energy gradually transfers from the particles to the electric field, resulting in a decrease in kinetic energy and an increase in electric energy.
The initial particle position and velocity is obtained by sampling the distribution
    \[
	f(\bm{x},\bm{v},t=0)=\frac{1}{2\pi L^2} \left(1+\alpha \cos(\bm{k} \cdot\bm{x})\right) \left[ \frac{1}{2} e^{-\frac{(\bm{v}-\bm{v}_d)^2}{2}}+\frac{1}{2} e^{-\frac{(\bm{v}+\bm{v}_d)^2}{2}} \right],
	\]
 where $\alpha=0.003$, $\bm{k}=(0.2,0)^T$, $L=2\pi/0.2$ and $\bm{v}_d=(2.4,0)^T$. We choose the mesh size $N_x=N_y=64$ and number of particle $N_p=6.4\times 10^5$. The square root of electric energy is displayed in Figure~\ref{f:twostream1}, which reveals that the proposed local Gauss's law preserving method is consistent with previous works~\cite{Medaglia2023JCP}, with a theoretical
growth rate of electric energy being $\gamma=0.2258$. Note that our proposed GP-PIC method agrees well with Poisson's result, so we omit the figures of Poisson's result in this example.

\begin{figure}[t!]
		\centering
	\includegraphics[scale=0.48]{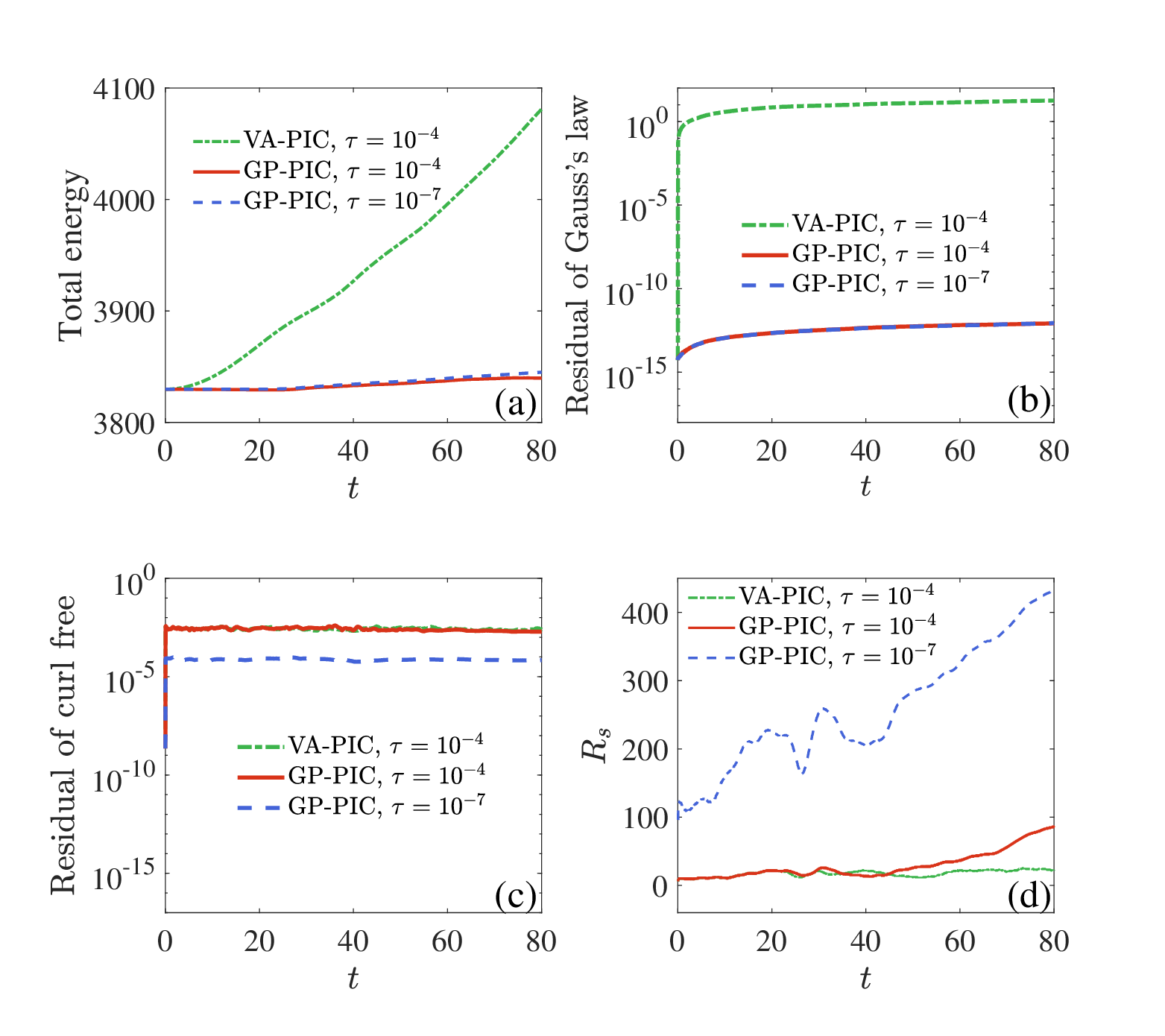}
		\caption{ Simulation results for the case of the two-stream instability with $\Delta t=0.05$: (a) the total energy; (b) the residual error of the Gauss's law; (c) the residual error of the curl-free condition; (d) The number of iteration steps, $R_s$, used in relaxation stage.}
  \label{f:twostream2}
	\end{figure}

Figure~\ref{f:twostream2} displays the total energy, the residual error of the Gauss's law, curl-free condition, and the number of relaxation steps in the correction stage.
We observe the similar results as Landau damping that the total energy computed by VA-PIC method deviates significantly from the initial energy as time evolves, whereas our proposed GP-PIC method exhibits significantly smaller deviations, displayed in Figure~\ref{f:twostream2} (a). As previously discussed, while energy-conserving scheme can indeed be designed, its rigorous implementation lies beyond the scope of this work.
 From Figure~\ref{f:twostream2} (b), one can find that the Gauss's law is perfectly satisfied in the proposed local methods. However, it is not true for the results obtained by the Amp\`ere's equation.
 To test the effect of the stopping criteria $\tau$, we perform simulations with $\tau=10^{-4}$ and $\tau=10^{-7}$.
 It is of interest to find that with a more stringent stopping criterion $\tau$, the residual error of the Gauss's law does not change much, as expected. But the number of iteration steps, $R_s$, in the correction stage of the proposed algorithm increases and the residual error of the curl-free condition effectively reduces one order of magnitude, as revealed by Figure~\ref{f:twostream2} (c) and (d).
  \begin{figure}[t!]
		\centering
	\includegraphics[scale=0.42,trim={0cm 2cm 0cm 0cm}, clip]{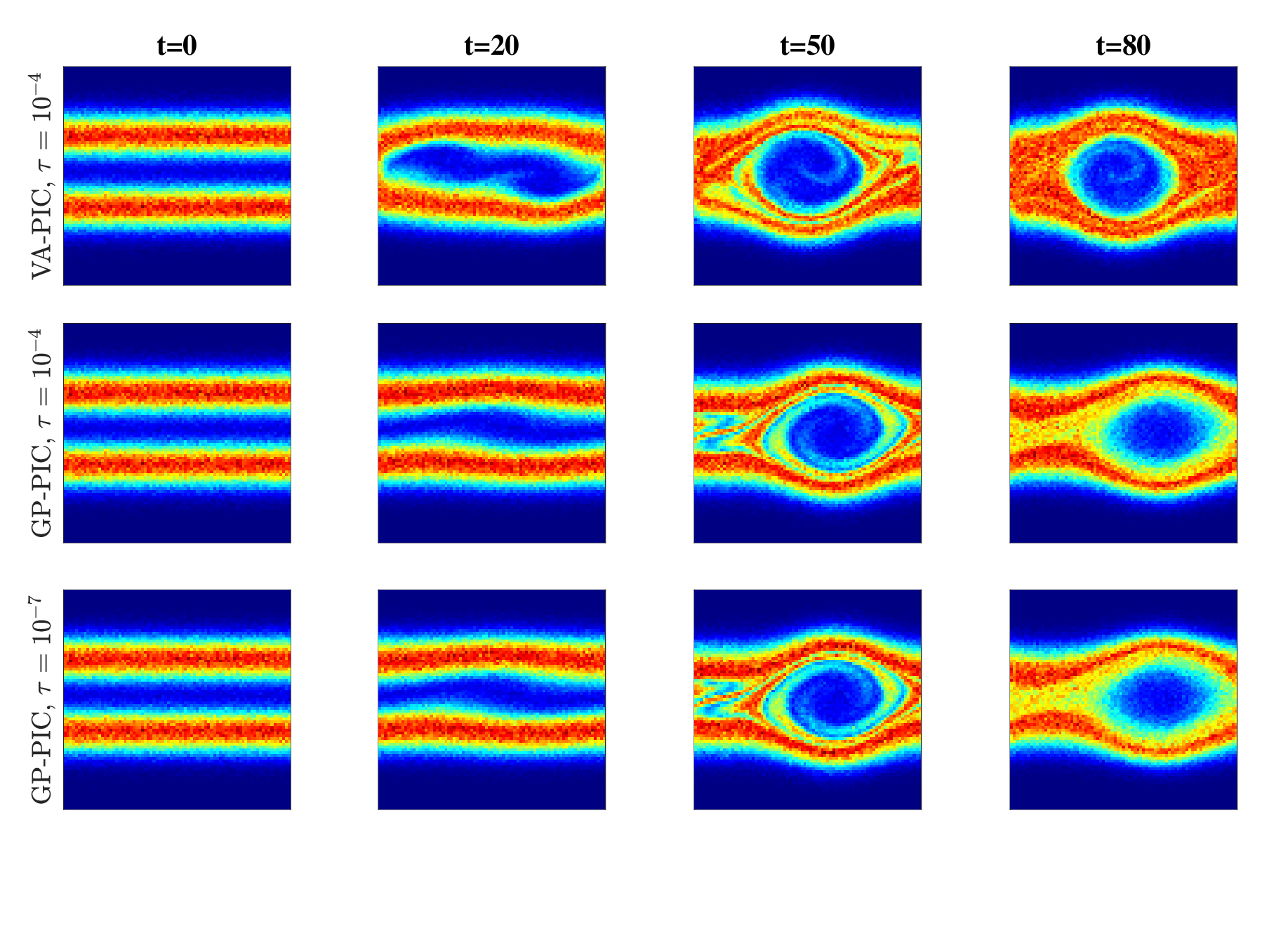}
		\caption{ The $x-v_x$ phase space distribution for the case of two-stream instability with $\lambda = 1$ and $\Delta t= 0.05$ at time $t=0,~20,~50,~80$. The first row shows the results computed by the Amp\`ere's equation. The second row ($\tau=10^{-4}$) and the third row ($\tau=10^{-7}$) are obtained by the proposed method with different stopping criteria in the correction stage.}
  \label{f:twostreamphase}
	\end{figure}

 As shown in Figure~\ref{f:twostreamphase}, we also study the $x$-$v_x$ phase space distribution of the system with different methods. A violation of the Gauss's law results in a prominent error accumulation for long-term simulations, as revealed by the results obtained from the Amp\`ere's equation in the first row of Figure~\ref{f:twostreamphase}. The phase space distribution exhibits vortex structures that are significantly different from those obtained using the local method.  Furthermore, the phase space distribution does not change much with a more stringent stopping criterion $\tau$, indicating that the curl-free condition is not as crucial as the Gauss's law in long-term simulations.



    \subsection{Diocotron instability}
Finally, we perform a variation of Diocotron instability to study the electron vortices when exposed to an external magnetic field. This instability has been well studied numerically in many works~\cite{Filbet2016SINUM,Ricketson2016sparse,GU2022JCP,Ricketson2023CPC}. The initial electron distribution follows
$$
f(x,y,\bm{v},t=0)=\frac{C}{2\pi} e^{-\frac{\bm{v}^2}{2}} e^{-\frac{(r-L/4)^2}{2(\delta L)^2} },~~ r=\sqrt{(x-L/2)^2+(y-L/2)^2},
$$
where $\delta=0.03$, $L=22$ and $C$ is a normalization parameter. Here, a uniform magnetic field $\bm{B}=(0,0,B_z)$ is imposed, and we choose $B_z=5$ and $B_z=15$. The mesh size is $N_x=N_y=64$, timestep is $\Delta t=0.01$ and the number of particles is taken as $N_p=10^6$ in the simulation. In this case, with finite-size particle approximation~\reff{fapp} for the distribution function, the motions of electron follow
\begin{align}
		&\frac{\dd \bm{x}_p}{\dd t} =\bm{v}_p , \\
		& \frac{\dd \bm{v}_p}{\dd t} = - \left[\bm{E}(\bm{x}_p) + \bm{v}_p\wedge \bm{B}\right],
\end{align}
for $p=1,\cdots,N_p$.

    \begin{figure}[t!]
		\centering
        \includegraphics[width=13cm,trim={0cm 2cm 1cm 1cm}, clip]{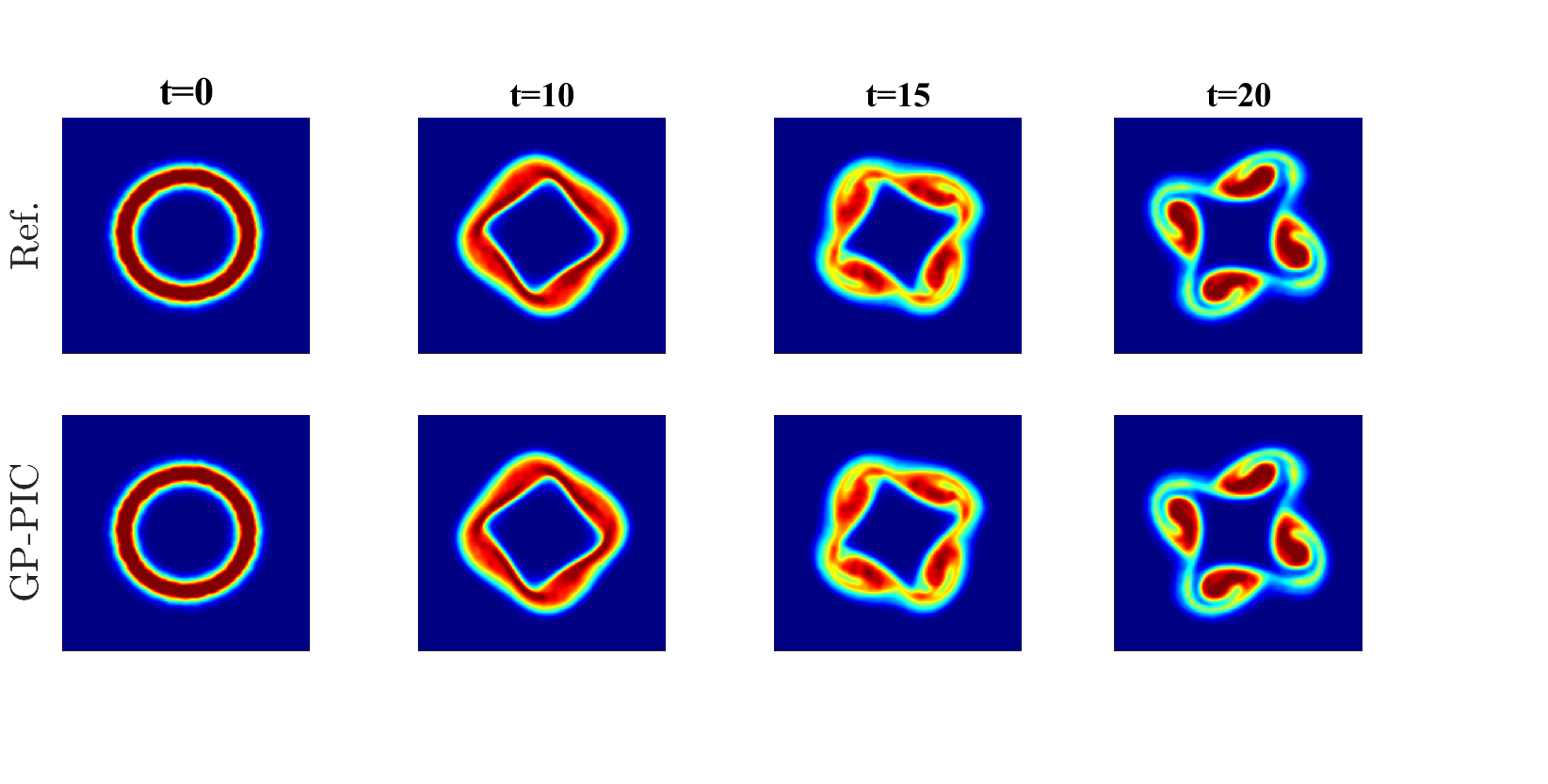}
		\caption{ The distribution of electron in the case of Diocotron instability at time $t=0,~10,~15,~20$ with $B_z=5$. The reference solution, labeled as ``Ref.'' is obtained by solving the Poisson’s
equation for the electric field.}
  \label{f:diocotronphase}
	\end{figure}
    \begin{figure}[t!]
		\centering
	 \includegraphics[width=13cm,trim={0cm 2cm 1cm 1cm}, clip]{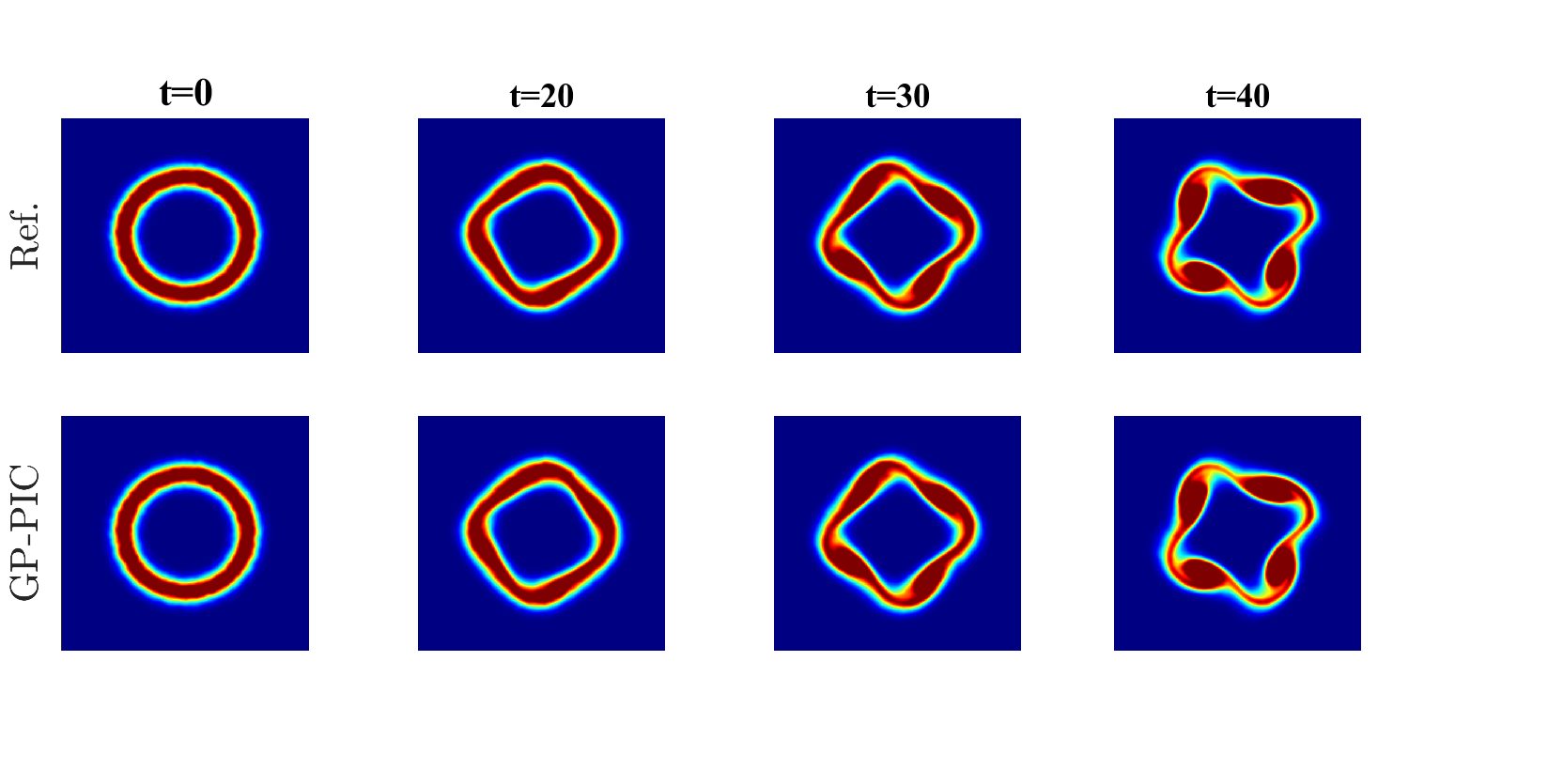}
		\caption{ The distribution of electron in the case of Diocotron instability at time $t=0,~20,~30,~40$ with $B_z=15$. The reference solution, labeled as ``Ref.'' is obtained by solving the Poisson’s
equation for the electric field.}
  \label{f:diocotronphase2}
	\end{figure}

    As evidenced by Figure~\ref{f:diocotronphase} and Figure~\ref{f:diocotronphase2}, the proposed method can effectively capture a vortex structure of the electron distribution due to
    the magnetic field in the $z$-direction.
    By setting $B_z=5$ in Figure~\ref{f:diocotronphase}, four well-resolved vortex structures emerge prominently in the simulation at $t = 20$. When the strength of magnetic field is increased to $B_z = 15$ in Figure~\ref{f:diocotronphase2}, it takes longer time to form vortex structures, being consistent with the results reported in the work~\cite{Ricketson2016sparse}. Furthermore, the vortices exhibit diminished features compared to the $B_z = 5$ configuration, a phenomenon demonstrated in \cite{GU2022JCP}.

\section{Conclusions}\label{s:Con}
The importance of rigorously maintaining the Gauss's law in plasma simulations cannot be  overemphasized‌. This work has proposed a Gauss's law preserving method for the electric field updating in particle-in-cell simulations in  the electrostatic limit.  The proposed method updates the electric field locally according to the electric fluxes induced by particle motion through splitting the motion into sub-steps along each dimension of the computational mesh, resulting a method that can enforce the Gauss's law exactly. To obtain a curl-free electric field, a local update scheme has been developed to further correct the electric field by relaxing the electric-field free energy subject to the Gauss's law. Theoretical analysis has confirmed that the proposed method maintains the Gauss's law up to round-off precision. Numerical performance on classical benchmarks, including the Landau damping, two-stream instability and Diocotron instability, has evidenced the advantages of the proposed method. It is expected that the local nature of the proposed method makes it a promising tool in parallel simulations of large-scale plasma. In addition, the feature that electric fields are locally updated according to the electric fluxes induced by particle motion can be further employed to construct a fully implicit discretization that allows a large time stepping size, which has great potential in efficient simulations of the quasineutral cases.


\section*{Acknowledgements}
This work is supported by the CAS AMSS-PolyU Joint Laboratory of Applied Mathematics
(Grant No. JLFS/P-501/24). Z. Qiao is partially supported by the Hong Kong Research Grants Council RFS grant RFS2021-5S03, NSFC/RGC Joint Research Scheme N\_PolyU5145/24, GRF grants 15302122 and 15305624. Q. Yin is partially supported by the Hong Kong Polytechnic University Postdoc Matching Fund Scheme 4-W425. The work of Z. Xu is partially supported by  NSFC (grants No. 12325113 and 12426304) and National Key R\&D Program of China (grant No. 2024YFA1012403). The work of S. Zhou is partially supported by National Key R\&D Program of China 2023YFF1204200.

\bibliographystyle{plain}
\bibliography{refbib}

\end{document}